\useunder{\uline}{\ul}{}
\newtheorem{thm}{Theorem}[section]
\newtheorem{defn}{Definition}[section]
\newtheorem{rem}{Remark}
\title{\LARGE \bf
A Cooperative Compliance Control Framework for Socially Optimal Mixed Traffic Routing

}
\author{Anni Li, Ting Bai, Yingqing Chen,  Christos G. Cassandras and Andreas A. Malikopoulos 
\thanks{This work was supported in part by NSF under grant CNS-2149511, CNS-2401007, CMMI-2219761, IIS-2415478, Mathworks, and by ARPAE under grant DE-AR0001282.}
\thanks{ A. Li, Y. Chen, and C. G. Cassandras are
 with the Division of Systems Engineering and Center for Information and
 Systems Engineering, Boston University, Brookline, MA 02446
 (email:\{anlianni; yqchenn; cgc\}@bu.edu).}
 \thanks{T. Bai and A. A. Malikopoulos are with the Information and Decision Science Lab, School of Civil $\&$ Environmental Engineering, Cornell University, Ithaca, New York, U.S.A. E-mails: \{{tingbai, amaliko\}@cornell.edu}}
}
\begin{document}

\maketitle
\thispagestyle{empty}
\pagestyle{empty}

\begin{abstract}
In mixed traffic environments, where Connected and Autonomed Vehicles (CAVs) coexist with potentially non-cooperative Human-Driven Vehicles (HDVs), the self-centered behavior of human drivers may compromise the efficiency, optimality, and safety of the overall traffic network. In this paper, we propose a Cooperative Compliance Control (CCC) framework for mixed traffic routing, where a Social Planner (SP) optimizes vehicle routes for system-wide optimality while a compliance controller incentivizes human drivers to align their behavior with route guidance from the SP through a “refundable toll” scheme. A key challenge arises from the heterogeneous and unknown response models of different human driver types to these tolls, making it difficult to design a proper controller and achieve desired compliance probabilities over the traffic network. To address this challenge, we employ Control Lyapunov Functions (CLFs) to adaptively correct (learn) crucial components of
our compliance probability model online,
construct data-driven feedback controllers, and demonstrate that we can achieve the desired compliance probability for HDVs,
thereby contributing to the social optimality of the traffic network.

\end{abstract}


\section{INTRODUCTION}
With rapid urbanization and population growth, traffic congestion and safety have long been major concerns for government organizations and society as a whole, making improving road network efficiency a key objective in intelligent transportation systems \cite{jeihani2022investigating,van2019travelers}. The Vehicle Routing Problem and its variants have gained significant popularity in research over the past decades \cite{toth2002vehicle}. Early approaches relied on static shortest path algorithms, such as Dijkstra's and Floyd-Warshall's algorithms \cite{hofner2012dijkstra}, which provide efficient routes based on predefined and deterministic road network conditions without accounting for dynamic traffic fluctuations. In a real-world traffic environment, information is always subject to uncertainties, such as actual demand, ongoing user requests, vehicle breakdowns, and traffic congestion. These fluctuations have motivated the development of the \emph{dynamic} vehicle routing problem \cite{abbatecola2016review,ritzinger2016survey}, which concurrently integrates real-time traffic data and re-optimizes the routes to adapt to varying conditions in route guidance by utilizing advanced sensing technologies, such as Roadside Units (RSUs), vehicle-to-vehicle (V2V), vehicle-to-infrastructure (V2I), and vehicle-to-smart terminal (V2T) communication \cite{jeihani2022investigating}. 

The emergence of Connected and Automated Vehicles (CAVs), also known as autonomous vehicles or self-driving cars, has the potential to significantly improve traffic performance and efficiency by better assisting drivers in making decisions to reduce accidents and traffic congestion\cite{bang2021AEMoD,Malikopoulos2020,xiao2021decentralized,li2025robust}. The shared information and cooperation among CAVs facilitate intelligent transportation management through precise trajectory planning and real-time traffic information. However, 100\% CAV penetration is unlikely to be achieved in the near future. This raises the question of how to coordinate CAVs with Human-Driven Vehicles (HDVs) to enhance traffic efficiency while also ensuring safety since human driving behavior is stochastic and hard to predict \cite{ghiasi2019mixed,wang2020controllability,li2024safe,xiao2024toward}. In such mixed traffic vehicle routing problems, HDVs generally operate based on individual driver preferences and limited traffic conditions, leading to \textit{selfish} optimal strategies where each vehicle selects the route that minimizes its own travel cost without considering the broader impact on the network. As a result, this decentralized decision-making will generally compromise system-wide efficiency and exacerbate traffic congestion \cite{roughgarden2005selfish,feldmann2003selfish}. In contrast, \textit{socially} optimal routing aims to minimize the system-wide travel cost by optimizing vehicle routes to balance network utilization and reduce congestion. This approach relies on centralized vehicle coordination, where a Social Planner (SP) collects complete information on the road network and assigns routes that improve the overall efficiency, even if they may not be individually optimal for every vehicle. The implementation of socially optimal routing in mixed traffic environments faces a significant challenge, as HDVs may deviate from assigned routes in pursuit of individual gains \cite{li2024towards,chen2008altruism,sadeghi2023social}. 

To overcome this challenge, there has been increasing research focused on integrating social psychology tools and implementing non-monetary management schemes into intelligent transportation systems to improve traffic efficiency and incentivize HDVs to behave in compliant ways. When solving traffic congestion problems, congestion pricing schemes are proposed in \cite{pigou2017economics,de2011traffic}, where individual drivers have to pay for negative externalities to ensure traffic efficiency. In \cite{kockelman2005credit}, credit-based congestion pricing (CBCP) is proposed, where drivers receive monetary travel credits to use on roads, while \cite{jalota2023credit} further adopts a mixed economy viewpoint to let eligible users receive travel credits while ineligible users pay out-of-pocket to use the express lanes. Considering the heterogeneity of human drivers, such congestion pricing schemes can be unfair \cite{brands2020tradable} as they tend to favor wealthier drivers. To overcome these limitations, non-tradable credit-based congestion pricing schemes have been applied as fair and equitable mechanisms in \cite{elokda2022carma}, which proposes a congestion management scheme called ``CARMA" utilizing non-tradable ``karma credits'' for all drivers to bid for the rights to use the express lanes; such credits may be redistributed so that the drivers can choose to benefit themselves now or in the future based on their individual value of time. Moreover, to deploy social contracts in practice, the use of Distributed Ledger Technologies (DLT) is described in \cite{ferraro2023personalised} to create personalized social nudges and to influence the behavior of users.

In this work, assuming HDVs can communicate with the SP, we propose a \emph{Cooperation Compliance Control} (CCC) framework to incentivize HDVs to comply with the route guidance provided by the SP. The control mechanism is implemented using a \emph{refundable toll} principle, where each vehicle commits a certain number of tokens to a digital wallet at the beginning of its driving period (which can be efficiently implemented through DLT). If the vehicle complies with the reference routes provided by the SP, the committed tokens will be fully refunded. Otherwise, a certain amount of tokens will be removed from the driver's digital wallet. Meanwhile, the SP continuously monitors the real-time system states, updating traffic flow information after a predetermined time window. Consequently, new reference routes will be re-assigned to non-compliant vehicles to mitigate traffic congestion and enhance the overall system-wide traffic efficiency.

\section{PROBLEM FORMULATION}
\label{sec:problem_formulation}
\subsection{Road Network}
Let $\mathcal{G}(\mathcal{V},\mathcal{E})$ be the directed graph induced by an urban road network, where $\mathcal{V}$ denotes a set of nodes and $\mathcal{E}\subset \mathcal{V}\times \mathcal{V}$ denotes a set of all edges in the network, and $|\cdot|$ is the cardinality of a set. For any node $i\in \mathcal{V}$ and the directed road edge $e^{ij}\in\mathcal{E}$ from node $i\in\mathcal{V}$ to node $j\in\mathcal{V}$, the social planner uses a function $s:\mathcal{E}\rightarrow \mathbb{R}^+$ as the travel cost for each edge depending on many factors, such as real-time states for other vehicles, road capacity, real traffic flow, travel time, speed limits, tolls, etc. In particular, $s^{ij}$ represents the travel cost for edge $e^{ij}\in\mathcal{E}$.

\subsection{Vehicle Routing Constraints}
The vehicle routing problem is essentially identifying the optimal set of routes for a set of vehicles to travel from their origin to their destination.
At any planning time $t$, let $\mathcal{K}(t)=\{1,2,..., N(t)\}$ denote a set that includes all $N(t)$ vehicles in the traffic network. We drop the time $t$ in this section to keep the notation simple.
For each vehicle $k\in \mathcal{K}$, its origin and destination nodes are denoted as 
$O_k,D_k\in \mathcal{V}$, 
respectively. Let the binary variable $x^{ij}_k\in\{0,1\}$  
denote the optimal route choice from node $i$ to $j$ traversed by vehicle $k$. Thus, $x^{ij}_k=1$ means that the edge $e^{ij}$ is included in the optimal route for $k$. 
To find the minimal travel cost for the vehicles in the road network, vehicle $k\in \mathcal{K}$ has to satisfy the following constraints.

\textbf{Flow Balance:} The number of times that vehicle $k\in \mathcal{K}$ enters the node $j\in\mathcal{V}/\{O_k,D_k\}$ is equal to the number of times it leaves, i.e.,
\begin{align}
\label{conseq:balanced_flow}
    \sum_{i=1}^{|\mathcal{V}|} x^{ij}_k=\sum_{i=1}^{|\mathcal{V}|} x^{ji}_k,\; \forall j\in\mathcal{V}/ \{O_k,D_k\}.
\end{align}

\textbf{Origin and Destination Determination:} The vehicle $k\in\mathcal{K}$ leaves its origin $O_k$ and enters the destination $D_k$ are ensured by the following two constraints, respectively:
\begin{align}
    \label{conseq:origin_flow}
    \sum_{j=1}^{|\mathcal{V}|} x_k^{O_k,j}=1,\;\;\sum_{i=1}^{|\mathcal{V}|} x_k^{i,D_k}=1.
\end{align}
The constraint \eqref{conseq:origin_flow} ensures that vehicle $k$ starts its trip from $O_k$ and ends at $D_k$.

\textbf{Redundant Visit Avoidance:} To avoid revisiting the same node in the road network $\mathcal{G}$, each node is limited to be visited at most once, i.e., 
\begin{align}
\label{conseq:node_visit_once}
    \sum_{j=1}^{|\mathcal{V}|} x^{ij}_k\leq 1, \forall i\in\{1,2,...,|\mathcal{V}|\}.
\end{align}

\textbf{Sub-tour Elimination:} To eliminate sub-tours in the solution, any edge ending at the origin $O_k$ will not be chosen in
vehicle $k$'s optimal route(s), and the flow is 0 if the two nodes of an edge are the same, i.e., 
\begin{align} \label{conseq:origin_flow_inverse}
    \sum_{i=1}^{|\mathcal{V}|} x_k^{i,O_k}=0,\;x_k^{ii}=0.
\end{align}

\subsection{Social Optimal Guidance}
Given a road network modeled as a directed graph $\mathcal{G}(\mathcal{V},\mathcal{E})$, a Social Planner (SP) seeks to enhance its performance by assigning each vehicle $k$ a socially optimal route, denoted as $R^{ref}_k$.
The corresponding travel cost along the edge $e^{ij}\in\mathcal{E}$ is $s^{ij}_k$
. In general, $R^{ref}_k$ can be optimized while improving mobility equity \cite{bang2024routing} to ensure equity and fairness over all vehicles in the traffic network, alleviating traffic congestion based on real-time traffic conditions, and minimizing travel time and energy consumption, etc, or trip equity using an appropriate equity metric as in \cite{bai2025routingguidanceemergingtransportation} for each vehicle when the travel information in the road environment is dynamically varying. Therefore, the travel cost $s^{ij}_k$ may differ across different problem settings. In this paper, we do not specify a particular form for it but instead, keep it in a general form at the discretion of the SP.
The SP aims to minimize the travel cost and determines the reference route $R^{ref}_k$ for each vehicle $k\in \mathcal{K}$ by solving the following optimization problem:
\begin{align}
\label{opt:social_guidance}
        J^{ref}_k:=&\min_{x^{ij}_k} \sum_{i=1}^{|\mathcal{V}|} \sum_{j=1}^{|\mathcal{V}|} s^{ij}_kx^{ij}_k,~~s.t. \;\;\eqref{conseq:balanced_flow}-\eqref{conseq:origin_flow_inverse}.
\end{align}
Once we obtain the optimal solution $x^{ij}_k$ from the optimization problem \eqref{opt:social_guidance}, the entire reference route $R^{ref}_k$ can be given by
\begin{align}
\label{refeq:origin}
    R^{ref}_k=\bigcup_{(i,j)\in\mathcal{E}} \{e^{ij}\ |\ x^{ij}_k=1\}.
\end{align}


\subsection{Human Preferences}
In a mixed-traffic environment, the travel cost $s^{ij}$ on edge $e^{ij}$ designed by the social planner may not be adopted by all Human-Driven Vehicles (HDVs). This divergence occurs because human drivers may either act selfishly by prioritizing only their personal travel time and distance or lack the necessary technology to communicate with other vehicles and access real-time traffic conditions. We define $V_{NC}(t)$, 
as the set of non-compliant vehicles at time $t$ (for simplicity, we write $V_{NC}$ in the remainder of this subsection), consisting of all HDVs that do not follow the reference trajectory $R^{ref}_k$, $k\in \mathcal{K}$, their routes are given by the following optimization problem:
\begin{align}
\label{opt:selfish_opt_init}
    J_k:=&\min_{x^{ij}_k} \sum_{i=1}^{|\mathcal{V}|}\sum_{j=1}^{|\mathcal{V}|}c^{ij}_k x^{ij}_k,~~ s.t. \;\;\eqref{conseq:balanced_flow}-\eqref{conseq:origin_flow_inverse}.
\end{align}
The only difference between the selfish optimal routing problem \eqref{opt:selfish_opt_init} and the socially optimal routing problem \eqref{opt:social_guidance} lies in their objectives, where $c^{ij}_k$ denotes the selfish travel cost issued by non-compliant vehicle $k\in V_{NC}$ along the edge $(i,j)\in\mathcal{E}$. Although we impose no constraints on the choice of $c^{ij}_k$, the default travel cost for HDVs is generally assumed to be the free flow time, which is commonly provided by widely used navigation tools like Google Maps. Since drivers can easily access this information, they often base their routing decisions on these estimated travel times, without considering potential congestion or network-wide optimization strategies. 
Due to the conflict between social and selfish goals, the optimal routes determined by problem \eqref{opt:selfish_opt_init} may differ from the reference routes given by \eqref{opt:social_guidance}
except for the special case where $c^{ij}_k=s^{ij}_k$.


When non-compliant vehicles deviate from their socially optimal reference routes and choose to travel along the selfishly optimal routes, they may slow down the traffic flow on main roads and lead to congestion. To alleviate such traffic congestion and increase the probability of the non-compliant vehicles complying with the references provided by the SP, a dynamic routing problem is formulated and new references will be provided at their subsequent decision points following any deviation; at the same time, tolls will be deducted from the user's digital wallet to penalize non-compliant behaviors. 
Note that while the solution of (\ref{opt:selfish_opt_init}) is unknown to the SP, any deviation from $R_k^{ref}$ that vehicle $k$ makes is detectable at decision points where the edge choice made by $k$ is observable. This allows the SP to monitor non-compliance and adjust routing strategies accordingly.
More details about the determination of decision points and the process by which the SP reassigns new reference routes for non-compliant vehicles are discussed in the next section.

\section{SOCIAL-COMPLIANCE OPTIMAL PLANNER}
\label{sec:compliance_control}
In the dynamic routing problem, the travel cost $s^{ij}$ is updated in real time based on the actual traffic conditions monitored by the SP. This update process must account for the current number of vehicles in the road network while incorporating the effects of newly arriving and exiting vehicle flows.
Both the set $\mathcal{K}(t)$ that includes all vehicles in the road network and the set $V_{NC}(t)$ that includes all non-compliant vehicles are updated each time a vehicle reaches a node and makes a routing decision.

\subsection{Decision Points}
In the routing problem, a decision point typically corresponds to an intersection or fork in the road network. 
Let $t_k^m$ 
be the time of vehicle $k$ arriving at the $m$-th decision point along the route from $O_k$ to $D_k$, and let $d_k^m(t_k^m) \in \mathcal{V}$ 
denote the actual $m$-th decision point detected by the SP at time $t_k^m$. Assuming the cardinality of the set $R^{ref}_k$ is $N^{ref}_k$,
the sequence of nodes along reference route $R^{ref}_k$ given by \eqref{refeq:origin} at the Origin $O_k$ with time $t^0_k$ can be written as
\begin{align}
    \Pi^{ref}_k=\{O_k,d^1_k(t^0_k),d^2_k(t^0_k),...,d^{N^{ref}_k(t^0_k)-1}_k(t^0_k),D_k\}.
\end{align}
\begin{figure}[hpbt]
    \centering   
    \includegraphics[ width=0.8\linewidth]{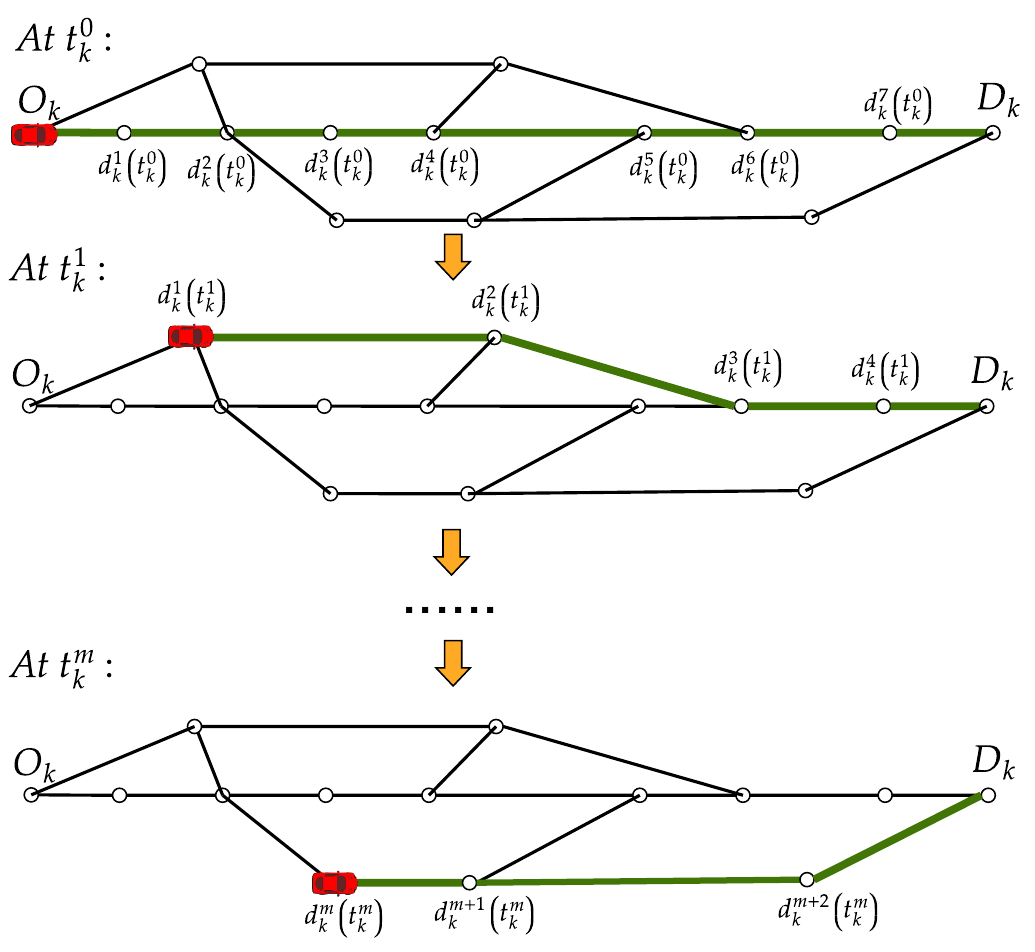} 
    \caption{\small The order of decision points in a routing problem.}
    \label{fig:decision_pts}
\end{figure}

As shown in Fig. \ref{fig:decision_pts}, each circle is a decision point and the bold green line denotes the updated reference route provided by the SP at each decision point $d_k^m(t_k^m)$. Starting from the origin $O_k$ at time $t^0_k$
, when the non-compliant (red) vehicle $k\in V_{NC}(t^0_k)$ deviates from the reference route $\Pi^{ref}_k(t^0_k):=\{O_k,d^1_k(t^0_k),d^2_k(t^0_k),...,d^7_k(t^0_k),D_k\}$, a new reference route $\Pi^{ref}_k(t^1_k):=\{d^1_k(t^1_k),d^2_k(t^1_k),...,d^4_k(t^1_k),D_k\}$ will be updated by the SP before it arrives at its subsequent decision point $d^1_k(t^1_k)$. By continuously monitoring the actions of a vehicle at decision points, and updating the reference route when a non-compliant vehicle deviates from it, at time $t_k^m$ the updated reference route is given by $\Pi^{ref}_k(t_k^m):=\{d_k^m(t_k^m),d^{m+1}_k(t_k^m),...,D_k\}$, and the actual path up to $d_k^m(t_k^m)$ is defined as
\begin{align}
    \label{def:real_path}
    \Pi_k(t_k^m)=\{O_k,d^1_k(t^1_k),d^2_k(t^2_k),...,d_k^m(t_k^m)\}.
\end{align}

At each decision point $d_k^m(t_k^m)$, when a deviation of vehicle $k\in V_{NC}(t_k^m)$ is detected by the SP, the node $d_k^m(t_k^m)$ is considered as the new origin and a new reference route can be determined by solving the following optimization problem:

{\small\begin{align}
    \label{opt:social_guidance_tk}
    J^{ref}_k(t_k^m):=\min_{x^{ij}_k(t_k^m)} \sum_{i=1}^{|\mathcal{V}|} \sum_{j=1}^{|\mathcal{V}|} s^{ij}_k(t_k^m) x^{ij}_k(t_k^m),~s.t.~ \eqref{conseq:balanced_flow}-\eqref{conseq:origin_flow_inverse}
\end{align}}
where $s^{ij}_k(t_k^m)$ is the travel cost for vehicle $k$ along edge $(i,j)\in \mathcal{E}$ at time $t_k^m$, and $x^{ij}_k(t_k^m)$ is a binary variable, as in (\ref{opt:social_guidance}), defined at time $t_k^m$.
Essentially, the optimal routing problem \eqref{opt:social_guidance_tk} is the same as \eqref{opt:social_guidance} and the only difference is that all the variables in \eqref{opt:social_guidance_tk} are defined at time $t_k^m$. 
Therefore, the updated reference route at time $t_k^m$ is given by
\begin{align}
\label{def:ref_tk}
R^{ref}_k(t_k^m):=\bigcup_{(i,j)\in\mathcal{E}} \{e^{ij}\ |\ x^{ij}_k(t_k^m)=1\}.
\end{align}
Letting $N^{ref}_k(t_k^m)$ denote the cardinality of the set $R^{ref}_k(t_k^m)$ in \eqref{def:ref_tk}, the corresponding sequence of decision points along the reference $R^{ref}_k(t_k^m)$ is given by
\begin{align}
    \Pi_k^{ref}(t_k^m):=(d_k^m(t_k^m),d^{m+1}_k(t_k^m),...,d^{N_k^{ref}(t_k^m)-1}_k(t_k^m),D_k).
\end{align}

\subsection{Compliance Probability}
At the $m$-th decision point $d_k^m(t_k^m)$, a vehicle $k\in V_{NC}(t_k^m)$ has two choices: $i)$ follow the socially optimal reference route $R^{ref}_k(t_k^m)$ assigned by the SP, or $ii)$ adhere to its selfishly optimal routes, given by problem \eqref{opt:selfish_opt_init}, to minimize its own travel costs. Once the reference route $\Pi^{ref}_k(t^{m-1}_k)$ is provided by the SP at time $t^{m-1}_k$, the compliant behavior of vehicle $k$ can be measured by checking if the subsequent decision point belongs to the reference route, i.e., whether or not
\begin{align}
\label{check:decision_pt_tk}
    d_k^m(t^{m}_k)=d_k^m(t^{m-1}_k),
\end{align}
where $d_k^m(t^{m-1}_k)\in\Pi^{ref}_k(t^{m-1}_k)$. Define a binary variable $\zeta_k(t_k^{m-1})$ to describe the compliance behavior of vehicle $k$ at time $t_k^{m-1}$:
\begin{align}
\label{zeta}
\zeta_k(t^{m-1}_k) = \textbf{1}
    [d_k^m(t^{m}_k)=d_k^m(t^{m-1}_k)],
\end{align}
where $\textbf{1}[\cdot]$ is the usual indicator function, i.e., $\zeta_k(t^{m-1}_k)=1$ when vehicle $k$ is considered compliant at time $t^{m-1}_k$, otherwise vehicle $k$ is non-compliant at time $t^{m-1}_k$.

For non-compliant HDVs, we design a cooperative compliance control scheme to incentivize them to align their behaviors with the reference routes through refundable tolls. This is done through a control $u_k(t_k^m)$ that represents the non-compliance cost issued by the SP and applied to the non-compliant vehicle $k\in V_{NC}(t_k^m)$ by deducting an amount $u_k(t_k^m)$ from its digital wallet. 

To affect how a human driver $k$ decides to comply or not to the reference route $R_k^{ref}$, we modify the driver's cost function as follows:
\begin{align}
\label{eq:real_cost}
    J_k^{self}(t_k^m, u_k) = J_k(t_k^m) + \alpha_k (M^p_k(t_k^m)+u_k(t_k^m)),
\end{align}
where $J_k(t_k^m)$ denotes the path cost obtained from problem \eqref{opt:selfish_opt_init}.
Here, $u_k(t_k^m)>0$ and is bounded by the ``refundable toll'' imposed on $k$ at time $t_k^0$ to prevent $J_k^{self}(t_k^m,u_k)$ from becoming excessively large, while
$M^p_k(t_k^m)$ is the accumulated control (digital wallet deductions) over the time interval $[t^0_k,t_k^m]$ determined by
\vspace{-2mm}
\begin{align} 
\label{eq:sum_u}
M_k^p(t^{m}_k)=\sum_{i=1}^{m-1}(1-\zeta_k(t^i_k))u_k(t^i_k),
\end{align}
where $\zeta_k(t^i_k)$ was defined in (\ref{zeta}). Moreover, 
$\alpha_k$ denotes a weight to measure the importance of the cost controls relative to the selfish path cost evaluated by $k$ 
(a small value indicates a ``stubborn'' driver insensitive to tolls imposed by the SP). 
Note that we impose a simple additive control in (\ref{eq:real_cost}), however, this formulation can be modified to include nonlinear forms, allowing for more flexible and complex strategies.

\begin{rem}
Even though the control $u_k$ is positive and bounded by the initial value of the digital wallet, there is still no guarantee that the wallet balance will always remain non-negative because some drivers may be consistently non-compliant, leading to continuous token deductions and potentially exhausting their wallets over time. To prevent negative balances, the SP can set specific rules for resetting the digital wallets of such ``bad" users, e.g., penalizing them with actual currency, eliminating them from the system 
or applying restrictions to the access of public infrastructure (e.g., highway usage) to incentivize better compliance. 
    $\hfill\blacksquare$
\end{rem}

The compliance probability is defined as the likelihood that a vehicle complies with the reference route assigned by the SP. 
It is reasonable to assume that for an HDV $k$, this probability is influenced by the relative difference between the reference route $R_k^{ref}$ cost and its selfishly optimal route cost as modified in (\ref{eq:real_cost}) with digital wallet deductions due to non-compliant behavior. If the reference path cost $J^{ref}_k$ is significantly higher than the selfishly optimal cost $J^{self}_k$, the non-compliant HDV is less likely to follow the route guidance provided by the SP. Conversely, if a non-compliant HDV has incurred large amounts of token deductions, it becomes more likely to comply with the reference route $R^{ref}_k$ assigned by the SP to avoid further deductions.

Therefore, the value of the compliance probability $P_k(R^{ref}_k(t_k^m))$ satisfies two key properties: $i)$ $P_k(R^{ref}_k(t_k^m))$ decreases as $J^{ref}_k$ increases; and $ii)$ $P_k(R^{ref}_k(t_k^m))$ increases as $u_k(t_k^m)$ (or $M^p_k(t_k^m)$) increases. 
The exact form of $P_k(R^{ref}_k(t_k^m))$ is obviously unknown and this is complicated by the fact that it is also time-varying; for instance, a driver may start as highly non-compliant on a route and gradually become compliant because of the effect of $u_k(t_k^m)$ after several decision points; or after being penalized several times, the driver's sensitivity $\alpha_k$ to the tolls may also change. As we will see, this exact form is not crucial as long as the two structural properties above are satisfied so as to capture the driver's trade-off between favoring their selfish route and complying with the socially optimal route. In what follows, we will adopt a form similar to the Boltzmann distribution in \cite{rowlinson2005maxwell},
and assume that the probability of HDV $k\in V_{NC}(t_k^m)$ following the reference route $R^{ref}_k(t_k^m)$ is given by
\begin{align}
\label{eq:comp-Prob}
    P_k(R^{ref}_k(t_k^m),u_k):= \dfrac{e^{- J^{ref}_k(t_k^m)}}{e^{- J^{ref}_k(t_k^m)}+e^{-J^{self}_k(t_k^m,u_k)}}.
\end{align}
Here, $J^{ref}_k(t_k^m)$ is the travel cost for vehicle $k$ along reference route $R_k^{ref}(t_k^m)$ at time $t_k^m$, and $J^{self}_k(t_k^m)$ is the transformed through (\ref{eq:real_cost}) travel cost for vehicle $k$ along a selfish route, considering both past deducted tokens $M_k^p(t_k^m)$ and the potential future penalty $u_k(t_k^m)$.

It is easy to verify that the two properties mentioned above are satisfied by (\ref{eq:comp-Prob}). For instance, note that when $J_k^{ref}$ is very large, the compliance probability approaches 0; on the other hand, if $J^{self}_k(t_k^m,u_k)$ is large due to a large control value, the compliance probability may approach 1, depending on the control upper bound. 


As already pointed out, 
in \eqref{eq:comp-Prob} the path costs $J_k^{self}$ and $J^{ref}_k$ are time-varying, as they depend on the actual traffic flow within the road network and the actual position of vehicle $k$, where the traffic flow is influenced by the behaviors of other non-compliant vehicles. 
This variation makes it challenging for the SP to effectively design a cooperation compliance control $u_k(t_k^m)$ for each vehicle $k$ at time $t_k^m$, aiming to increase the compliance probability while avoiding excessive penalties on non-compliant vehicles. 
In effect, the compliance probability can be viewed as a \emph{state variable} associated with vehicle $k$ that follows unknown dynamics. Estimating this state and driving it to a desirable value (ideally 1) is a task that we pursue next.
To achieve this, we employ Control Lyapunov Functions (CLFs) to ensure that the actual compliance probability converges to a desired value $Q_{k}$ (ideally 1) set by the SP. 

\begin{rem}
    As already pointed out, the determination of the compliance probability in \eqref{eq:comp-Prob} is not unique as long as the chosen form satisfies the two properties given above. In the routing problem, since $J^{ref}_k$ and $J^{self}_k$ represent the path costs for the reference route and selfishly optimal route (including penalties in the form of digital wallet deductions) respectively, these two values are bounded, making $P_k(R^{ref}_k(t_k^m)) \in (0,1)$. 
    This aligns with real-world traffic, as it is unrealistic to assume that a human driver will either always or never follow route guidance.
    $\hfill\blacksquare$
\end{rem}


\subsection{Cooperation Compliance Control}
In this section, we first briefly introduce Control Lyapunov Functions (CLFs), then illustrate how to employ CLFs to determine the control $u_k(t_k^m)$, $k\in V_{NC}(t_k^m)$, $m=0,1,2,...,$ to increase the compliance probability in \eqref{eq:comp-Prob} to some desired value $Q_{k}$.

\subsubsection{Control Lyapunov Functions}
\begin{defn}
(\emph{Class $\mathcal{K}$ Function} \cite{khalil2002nonlinear}) A continuous function $\alpha:[0,a)\rightarrow [0,\infty),a>0$ is said to belong to class $\mathcal{K}$ if it is strictly increasing and $\alpha(0)=0$.
\end{defn}

Consider an affine control system
\begin{equation}
\label{eq:affine_system}
    \dot{\bm{x}} = f(\bm{x}) + g(\bm{x})
    \bm{u},
\end{equation}
where $\bm{x}\in \mathbb{R}^n, \bm{u}\in \mathcal{U}\subset \mathbb{R}^q$ denote the state and control vector respectively, $f:\mathbb{R}^n \rightarrow \mathbb{R}^n$ and $g:\mathbb{R}^n \rightarrow \mathbb{R}^{n\times q}$ are Lipschitz continuous. 

\begin{defn}
\label{def:clf}
    (\emph{Control Lyapunov Function} \cite{ames2012control,xiao2023safe})
    A continuously differentiable function $V:\mathbb{R}^n \rightarrow \mathbb{R}$ is an exponentially stabilizing Control Lyapunov Function (CLF) for system (\ref{eq:affine_system}) if there exists constants $c_1>0,c_2>0,c_3>0$ such that for all $\bm{x}\in \mathbb{R}^n$, $c_1 ||\bm{x}||^2\leq V(\bm{x})\leq c_2||\bm{x}||^2$,
    \begin{equation} \label{eqn:clf}
        \inf_{\bm{u}\in \mathcal{U}} [L_fV(\bm{x})+L_gV(\bm{x})\bm{u}+c_3V(\bm{x})] \leq 0.
    \end{equation}
\end{defn}

\subsubsection{Compliance Probability Convergence Analysis}
In real traffic conditions, it is difficult for the SP to accurately model how humans respond to controls (refundable tolls) in a complex and dynamic environment, making the dynamics of the compliance probability $P_k(R^{ref}_k(t))$ unknown to the SP. Let $\hat P_k(t)$ be the estimated compliance probability of vehicle $k\in V_{NC}(t_k^m)$ over time. Then, adopting the general form (\ref{eq:affine_system}) for the dynamics of the estimated compliance probability $\hat P_k(t)$ we have 
\begin{align}
\label{eq:phat_dynamics}
    \dot{\hat{P}}_k=f_k(\hat P_k)+g_k(\hat P_k)u_k,
\end{align}
where $f_k,g_k$ are Lipschitz continuous functions designed by the SP (based on experience or vehicle driving records), $u_k\in U$ is the control to vehicle $k$, and $U$ is the feasible control set, where the lower bound is 0 and the upper bound is determined by the initial value of the digital wallet.
The estimated compliance probability dynamics \eqref{eq:phat_dynamics} should satisfy the same two properties as \eqref{eq:comp-Prob} to ensure a reliable estimation. This is ensured by restricting $g_k(\cdot)$ to be positive to increase $\hat P_k$ as $u_k$ increases, while $f_k(\cdot)$ allows for considerable generality and can be updated at each decision point based on an observed error as defined next, hence exploiting a feedback mechanism. 

At each decision point, the error between the estimated compliance probability $\hat P_k(t_k^m)$ and our model \eqref{eq:comp-Prob} 
is
\begin{align}
\label{eq:error}
e_k(t_k^m):=  P_k(R^{ref}_k(t_k^m)) - \hat P_k(t_k^m).
\end{align}
Thus, the $f_k$ function in \eqref{eq:phat_dynamics} can be updated by
\begin{align}
    f_k(\hat P_k(t_k^m)) = f_k(\hat P_k(t^{m-1}_k)) + \dot e_k(t_k^m).
\end{align}
In this way, we always have measurements such that $e_k(t_k^m)=0$ and $\dot e(t_k^m)$ is close to 0 at $t_k^m$ by setting $P_k(R^{ref}_k(t_k^m)) = \hat P_k(t_k^m)$. This iteration increases the accuracy of the compliance probability estimation at each decision point. 
 
The SP has two objectives: first, to ensure that the estimated compliance probability $\hat P_k(t)$ closely matches $P_k(R^{ref}_k(t))$, and second, to drive the compliance probability $P_k(R^{ref}_k(t))$ toward its desired value $Q_{k}$ as $t$ increases. To achieve these two objectives, we construct a Lyapunov function for vehicle $k\in \mathcal{K}$ as follows
\begin{align}
\label{eq:Lyapunov_func}
    V_k(\hat P_k)=\xi_1(\hat P_k-Q_{k})^2+\xi_2(\hat P_k-P_k(R^{ref}_k))^2,
\end{align}
where $\xi_1,\xi_2\in(0,1)$ denote weights forming a convex combination above and $Q_{k}\in[0,1]$ is a desired compliance probability for vehicle $k\in \mathcal{K}$. We aim for $\hat P_k(t_k^m)$ to converge to $P_k(R^{ref}_k(t_k^m))$ and for $\hat P_k(t_k^m)$ to converge to $Q_k$.
Once the Lyapunov function \eqref{eq:Lyapunov_func} is proved to be decreasing and converging to 0, the two objectives of the SP can be achieved, ensuring that the actual compliance probability $P_k(R^{ref}_k)$ of vehicle $k$ eventually reaches the desired value $Q_k$.


\begin{thm}
(based on \cite{ames2014rapidly}) For the system \eqref{eq:phat_dynamics}, if any locally Lipschitz continuous feedback control law $u_k$ belongs to the set
\vspace{-1mm}
{\small
\begin{align}
\label{set:clf_control}
   K(\hat P_k)=\{u_k\in U:L_fV_k(\hat P_k)+L_gV_k(\hat P_k)u_k+c_3V_k(\hat P_k)\leq 0\}
\end{align}}
for all $\hat P_k$, the Lyapunov function $V_k(\hat P_k)$ is an exponentially stabilizing Control Lyapunov Function and converges to 0 as $t\rightarrow \infty$. 
\end{thm}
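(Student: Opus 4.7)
The proof plan is to apply the standard exponentially stabilizing CLF argument (as in \cite{ames2014rapidly}) to the estimated probability dynamics \eqref{eq:phat_dynamics}, and then use the definition of $V_k$ to transfer exponential decay of $V_k$ into convergence of the actual compliance probability $P_k(R^{ref}_k)$ to $Q_k$. First, I would verify that $V_k(\hat P_k)$ in \eqref{eq:Lyapunov_func} qualifies as a valid CLF candidate in the sense of Definition~\ref{def:clf}. Treating $Q_k$ and $P_k(R^{ref}_k)$ as parameters at each decision point, $V_k$ is a convex combination of two non-negative quadratic terms in $\hat P_k$; expanding the sum of squares yields a quadratic in $\hat P_k$ shifted by the convex combination $\xi_1 Q_k + \xi_2 P_k(R^{ref}_k)/(\xi_1+\xi_2)$, so $V_k$ admits the required quadratic upper and lower bounds in the corresponding error variable, which is what is needed to invoke exponential stability.

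The core step is then to differentiate $V_k$ along the dynamics \eqref{eq:phat_dynamics}. Using the chain rule,
\begin{align}
\dot V_k = \tfrac{\partial V_k}{\partial \hat P_k}\,\dot{\hat P}_k = L_{f_k} V_k(\hat P_k) + L_{g_k} V_k(\hat P_k)\, u_k.
\end{align}
Since by hypothesis $u_k \in K(\hat P_k)$, the defining inequality of \eqref{set:clf_control} gives $\dot V_k \leq -c_3 V_k$ pointwise in time. Local Lipschitz continuity of $u_k$ together with Lipschitz continuity of $f_k, g_k$ ensures existence and uniqueness of solutions to the closed-loop system, so the comparison lemma (or a direct Gr\"onwall argument) yields
\begin{align}
V_k(\hat P_k(t)) \leq V_k(\hat P_k(t_k^0))\, e^{-c_3 (t - t_k^0)},
\end{align}
establishing exponential convergence of $V_k$ to $0$ and hence the CLF property in Definition~\ref{def:clf}.

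Finally, since $V_k$ is a strictly convex combination of $(\hat P_k - Q_k)^2$ and $(\hat P_k - P_k(R^{ref}_k))^2$ with $\xi_1,\xi_2 \in (0,1)$, its exponential decay to zero forces each squared term to vanish. Consequently, $\hat P_k(t) \to Q_k$ and $\hat P_k(t) \to P_k(R^{ref}_k(t))$, so by the triangle inequality $P_k(R^{ref}_k(t)) \to Q_k$, establishing both of the SP's objectives.

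The main obstacle I anticipate is not the chain of implications above, which is standard, but rather the interaction between two non-standard features of the setup: the control set $U$ is bounded by the initial digital wallet, so the feasibility set $K(\hat P_k)$ in \eqref{set:clf_control} may be empty when $L_{g_k} V_k$ is small relative to $L_{f_k} V_k + c_3 V_k$; and the ``target'' $P_k(R^{ref}_k(t))$ inside $V_k$ is itself time-varying and only measured at decision points, which strictly speaking makes the Lyapunov analysis a hybrid/sampled-data argument rather than a purely continuous one. The theorem sidesteps the first issue by assuming the existence of $u_k \in K(\hat P_k)$; for the second, I would argue that the online update $f_k(\hat P_k(t_k^m)) = f_k(\hat P_k(t_k^{m-1})) + \dot e_k(t_k^m)$ in \eqref{eq:phat_dynamics} enforces $e_k(t_k^m) = 0$ at each decision point, so $P_k(R^{ref}_k)$ can be treated as (piecewise) constant between decision points and the continuous-time CLF argument applies on each interval.
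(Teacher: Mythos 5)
Your proposal is correct and follows essentially the same route as the paper: membership in $K(\hat P_k)$ gives $\dot V_k \leq -c_3 V_k$, and a comparison/Gr\"onwall argument yields $V_k(\hat P_k(t)) \leq V_k(\hat P_k(t_k^0))e^{-c_3 t}$, with the rest deferred to \cite{ames2014rapidly}. The paper's own proof is exactly this two-line argument; your added verification of the quadratic bounds, the transfer of $V_k \to 0$ into $P_k(R_k^{ref}) \to Q_k$, and the caveats about the bounded control set and the piecewise-constant treatment of the time-varying target are all consistent with (and more careful than) what the paper states.
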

\begin{proof}
    The set $K(\hat P_k)$ consists of the controls that result in $\dot{V}_k(\hat P_k)\leq -c_3V_k(\hat P_k)$, which further indicates that
\begin{align}
    V_k(\hat P_k(t))\leq V_k(\hat P_k(t^0_k))e^{-c_3t},
\end{align}
where $V_k(\hat P_k(t^0_k))$ denotes the initial value of the Lyapunov function at time $t^0_k$. Hence, as proved in \cite{ames2014rapidly}, for any locally Lipschitz continuous feedback control law $u_k$ such that $u_k\in K(\hat P_k)$ for all $\hat P_k$, the Lyapunov function $V_k(\hat P_k)$ is an exponentially stabilizing Control Lyapunov Function, and converges to 0 as $t\rightarrow \infty$. 
\end{proof}

This result demonstrates that: first, the estimated compliance probability $\hat P_k$ converges to its desired value $Q_k$, and second, $\hat P_k$ converges to the actual compliance probability $P_k(R^{ref}_k)$. Consequently, this establishes that the actual compliance probability $P_k(R^{ref}_k)$ converges to the desired value $Q_k$.

The final step is the determination of the control $u_k$. Clearly, this has 
to be selected from the set defined in \eqref{set:clf_control}.
We proceed in the standard way used to derive controls based on CLFs (see \cite{xiao2023safe})
by updating the control at each decision point $d_k^m(t_k^m)$ as the solution of the following quadratic program:
\begin{subequations}
\label{qp}
    \begin{align}
    \label{qp:obj}
    \min_{u_k(t_k^m),\delta_k(t_k^m)} &u_k^2(t_k^m)+\gamma\delta_k^2(t_k^m)\\
    s.t.~
    \label{eq:clf_condition}
    L_fV_k(\hat P_k)&+L_gV_k(\hat P_k)u_k+c_3V_k(\hat P_k)\leq \delta_k^2,
\end{align}
\end{subequations}
where $\delta_k$ is a controllable variable that relaxes \eqref{eq:clf_condition} as a soft constraint, $\gamma$ is an adjustable weight to capture the relative importance of the two terms in \eqref{qp:obj}, and $u_k(t_k^m)\in U$.
Note that \eqref{eq:clf_condition} relaxes \eqref{eqn:clf} and it is the one that turns \eqref{eqn:clf} into a soft constraint to 
ensure the feasibility of the quadratic program \eqref{qp}.

Moreover, for a road network, another objective of the SP is to maximize the \emph{average compliance probability} $Q^*=\frac{1}{N}\sum_{k=1}^N Q_k$ over all vehicles, bringing it as close to 1 as possible. Ideally, if all the HDVs respond positively to the control input, 
$Q_k=Q^*,\; \forall k\in \mathcal{K}$. However, when there exist some stubborn HDVs $k_s\in V_{NC_s}\subset \mathcal{K}$ that consistently refuse to comply with the SP routes, we deal with them by setting higher values for $Q_k$ if $k\in \mathcal{K}/V_{NC_s}$. In this manner, we compensate for these non-compliant HDVs, thereby maintaining the desired average compliance probability $Q^*$. In other words, for $a = \frac{|V_{NC_s}|}{|\mathcal{K}|}$ that represents the fraction of the stubborn non-compliant vehicles in the road network, the choice of $Q_k$ may be $Q_k=\min\{\frac{Q^*}{1-a},1\}, k\in \mathcal{K}/V_{NC_s}$. 
This addresses the \emph{fairness} issue that arises when some users of a system aiming at social optimality are not as compliant as others.

\subsection{Vehicle Coordination in the Routing Problem}
In this section, we describe the complete procedure of applying the cooperation compliance control for non-compliant vehicles in a road network $\mathcal{G}(\mathcal{V},\mathcal{E})$, 
to increase the compliance probability of the non-compliant vehicles, and ultimately improve traffic performance. Assuming vehicle $k\in V_{NC}(t^0_k)$ starts at time $t^0_k$, the SP proceeds as follows:

1) Step 1: 
The SP collects the information for all vehicles $k$ in set $\mathcal{K}$, including: origin $O_k$, destination $D_k$, departure time $t^0_k$, and the travel cost $c^{ij}_k$ on all edges $(i,j)\in \mathcal{E}$. Then, it estimates the real-time flow and determines the social travel cost $s^{ij}_k$ on all edges $(i,j)\in \mathcal{E}$ accordingly. 

2) Step 2: At time $t^0_k$, the SP solves the optimization problem \eqref{opt:social_guidance_tk} and assigns the reference route $R^{ref}_k(t^0_k)$ to vehicle $k$. All CAVs are assumed to be fully compliant, with a compliance probability of 1, whereas HDVs may follow the reference route with a compliance probability of $P_k(R^{ref}_k(t^0_k))$, where $k \in V_{NC}(t^0_k)$, since their selfishly optimal route determined by \eqref{opt:selfish_opt_init} differs from the socially optimal one.

3) Step 3: After vehicles depart from their origins, the SP continuously monitors the position of all vehicles $k$ in set $\mathcal{K}$ and identifies the node number of its next decision point. Upon reaching each decision point $d_k^m(t_k^m)$, the SP checks whether the condition \eqref{check:decision_pt_tk} for vehicle $k\in V_{NC}(t_k^m)$ is satisfied.

4) Step 4: If the condition \eqref{check:decision_pt_tk} for vehicle $k\in V_{NC}(t_k^m)$ is satisfied, then there is no penalty for vehicle $k$ at time $t_k^m$, i.e., $u_k(t_k^m)=0, \zeta_k(t_k^m)=1$, and the compliance probability remains unchanged; otherwise, the SP solves the QP \eqref{qp} 
to obtain the control $u_k(t_k^m)$ and update the compliance probability $P_k(R^{ref}_k(t_k^m))$ based on the current social travel cost $s^{ij}_k(t_k^m)$ and the control $u_k(t_k^m)$.

5) Step 5: If vehicle $k$ arrived at the destination, remove it from the road network, and update the real-time travel flow and the travel cost $s^{ij}_k$ until all vehicles have arrived at their destinations; otherwise, return to Step 3).

\section{SIMULATION RESULTS}
\label{sec:simulation}
\subsection{Network Setup}
This section presents simulation results demonstrating the effectiveness of the proposed CCC framework in the mixed-traffic routing problem. We consider a subset of the Boston urban road network, as shown in Fig. \ref{fig:routing}. The road network consists of 37 nodes and 58 edges, and the length of each edge is labeled in Fig. \ref{fig:routing} and is obtained from \textit{GoogleMaps} \cite{googlemap}. The simulation of this network includes 1000 vehicle trips, and their origins and destinations are randomly selected from the red nodes in the road network. The speed limit is set to $35$ mph for local road edges and $60$ mph for highway edges, and the capacity is 2000 vehicles per hour for each edge. All vehicles are assumed to depart from their origins simultaneously and make decisions at each node (decision point). The real travel time on each edge is modeled by the Bureau of Public Road (BPR) function \cite{united1964traffic}, with parameters $\alpha = 0.14,~\beta = 4$. 
\begin{figure}[hpbt]
    \centering   
    \includegraphics[ width=0.8\linewidth]{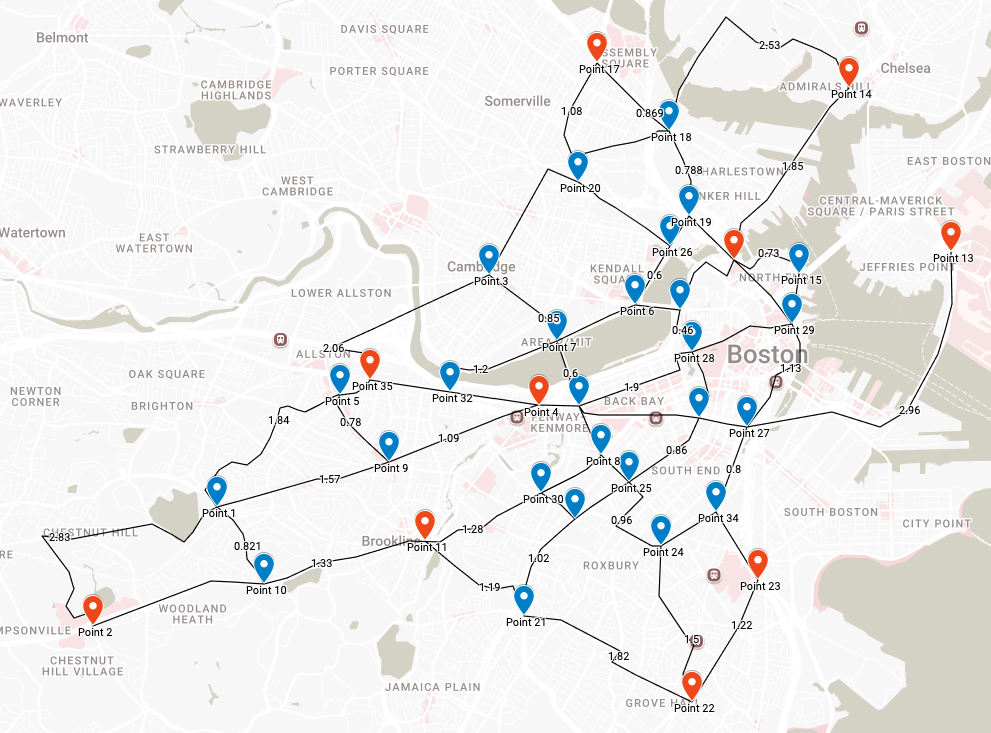} 
    \caption{Boston area road network.}
    \label{fig:routing}
\end{figure}

In the simulation, all vehicles are assumed to minimize their travel time for the trip. Given the origins and destinations, the SP estimates the real-time flow on each edge using the BPR model and assigns a socially optimal route to each vehicle to minimize overall travel time and mitigate congestion based on traffic conditions. In the optimization problem \eqref{opt:social_guidance_tk}, the travel cost $s^{ij}_k$ is set as the \textit{real flow} time on edge $e^{ij}$. 
However, since HDVs lack perfect information about other vehicles in the network, they may estimate their arrival time by using free-flow conditions. The travel cost $c^{ij}_k$ in problem \eqref{opt:selfish_opt_init} is set as the \textit{free flow} time on edge $e^{ij}$. This discrepancy can lead to traffic congestion and reduce the overall efficiency of the network.

\subsection{Results and Analysis}
The SP is a centralized coordinator that continuously monitors the position of all vehicles in the road network. Due to discrepancies between the routes determined by problems \eqref{opt:social_guidance_tk} and \eqref{opt:selfish_opt_init}, some HDVs may deviate from the reference route assigned by the SP and instead follow their own selfishly selected routes. At each decision point $d_k^m(t_k^m)$, the SP updates the travel cost $s^{ij}_k(t_k^m)$ for all edges $(i,j) \in \mathcal{E}$ based on real-time traffic flow information. New reference routes are then reassigned to all vehicles after detecting non-compliant behaviors
to minimize traffic disruptions. The time window for estimating the real traffic flow is $\Delta t=120s$. In \eqref{eq:real_cost}, the parameter is set as $\alpha_k=3$, and the dynamics of the estimated compliance probability in (\ref{eq:phat_dynamics}) are given by
\begin{align}
\nonumber
    \dot{\hat P}_k = f_k(\hat P_k(t_k^m)) + u_k,
\end{align}
where $f_k(\hat P_k(0))=0$ and will be updated at each $t_k^m$, and we have set $g_k(t)=1$ for all $t$.
The weights in \eqref{eq:Lyapunov_func} are set as $\xi_1=0.5,\xi_2=0.5$ to balance the estimation error and desired value reachability. In \eqref{qp}, the parameters are $\gamma=0.5$, $c_3=100$. The desired compliance probability is set as $Q^*=0.9$ with $Q_k=0.9,~\forall k\in\mathcal{K}$.  
\begin{figure}[hpbt]
    \centering   
    \begin{subfigure}{\linewidth}
    \centering 
      \includegraphics[width=\linewidth]{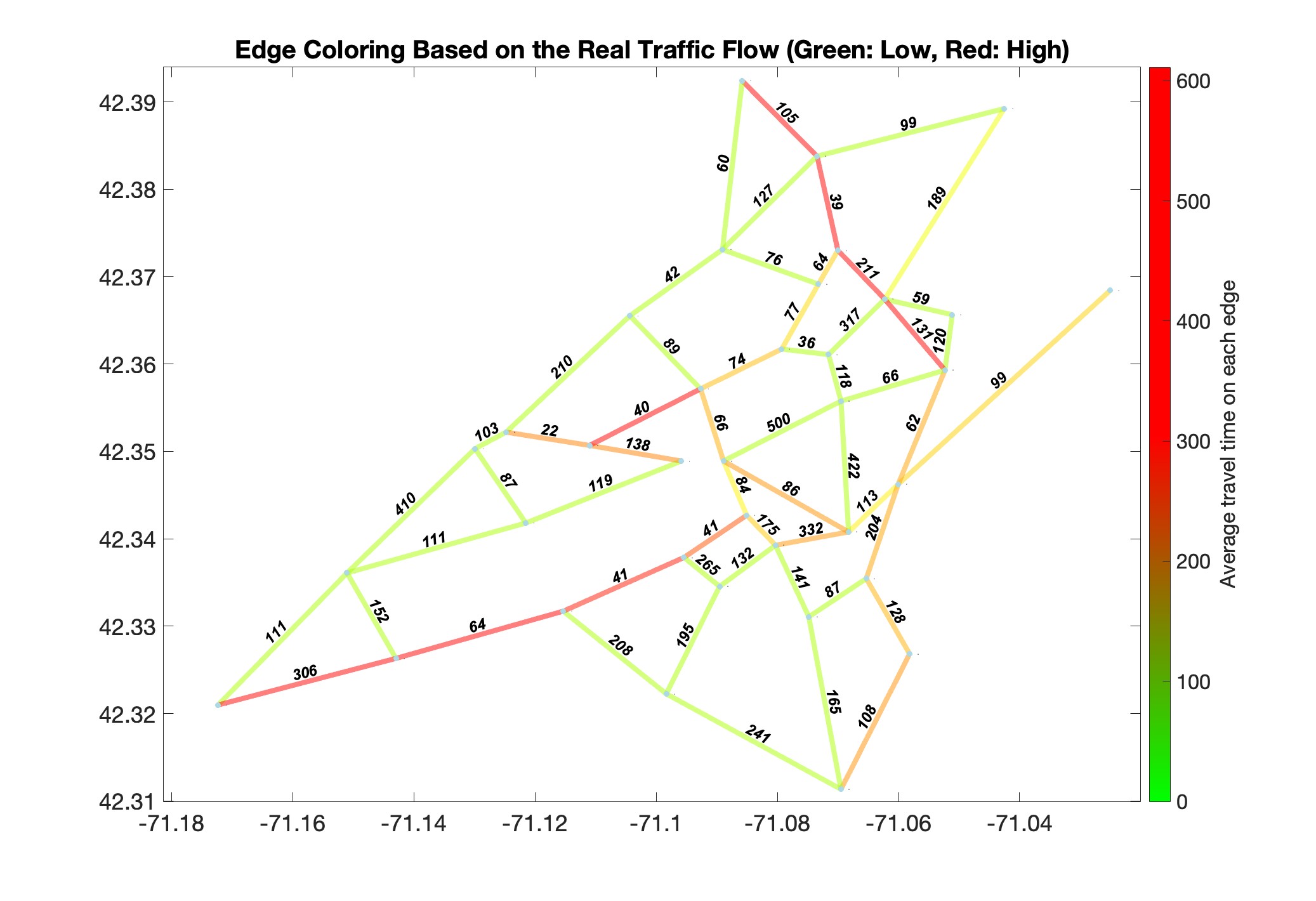}  
     \caption{Average travel time of each edge under Baseline.}
    \label{fig:baseline}
    \end{subfigure}   
    
    \begin{subfigure}{\linewidth}
    \centering 
      \includegraphics[width=\linewidth]{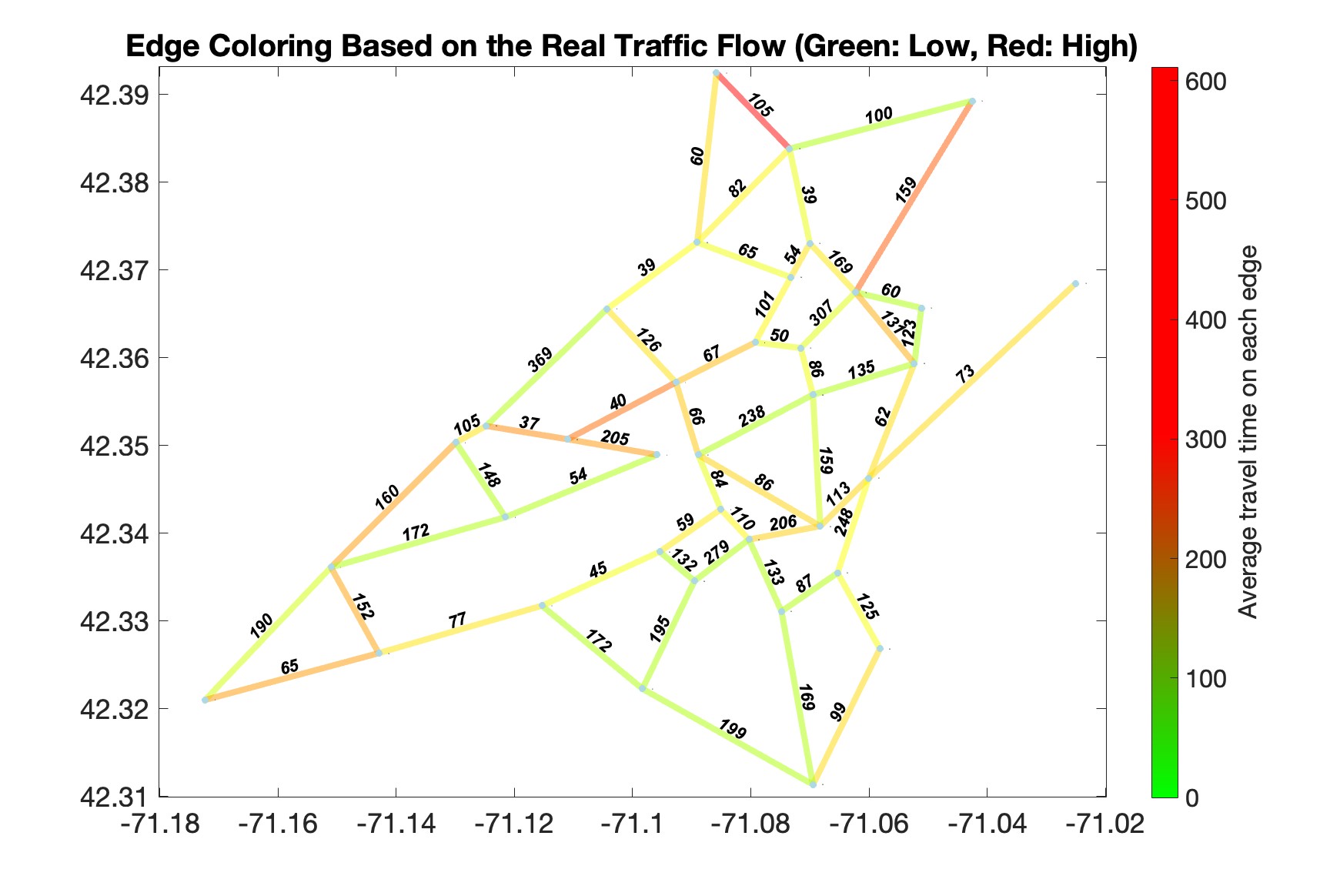}  
      \caption{\centering{Average travel time of each edge under CCC.}}
      \label{fig:ccc}
    \end{subfigure}
     \caption{Average travel time of each edge.}
\end{figure}



We consider the selfish optimal planning problem given by \eqref{opt:selfish_opt_init} as the ``Baseline". The average travel time per vehicle decreased from 1342.9s in the baseline case to 1029.1s in the socially optimal case after the SP applied control $u_k(t_k^m)$ at each decision point where a vehicle deviates from the SP reference route, representing a 23.3\% reduction. After implementing the CCC framework, the maximum and minimum travel times are reduced to $1340.1s$ and $581.9s$, respectively, compared to $2129.7s$ and $805.6s$ in the baseline case. 

Additionally, the traffic congestion, which is captured by the average travel time along each edge, is illustrated in Fig. (\ref{fig:baseline}) for the Baseline case and in Fig. (\ref{fig:ccc}) under the CCC framework.
In these two figures, green edges indicate that the actual average travel time is less than or equal to the free-flow time, red edges are those where the actual travel time exceeds twice the free-flow time, and orange edges are those where the actual average travel time is between the free-flow time and twice its value. In Fig. (\ref{fig:baseline}), there are 8 red edges and some green edges, indicating that some parts of the network are heavily congested while others remain underutilized. However, in Fig. (\ref{fig:ccc}), when the SP assigns socially optimal reference routes to each vehicle and implements cooperation compliance control to address non-compliant behaviors, the network utilization increases. As a result, there are no red edges and no heavy traffic congestion, indicating that the traffic distribution across the map is more balanced.

The evolution of the compliance probability as a function of the number of decision points is illustrated in Fig. \ref{fig:comp_prob}.
At each decision point, the control $u_k$ for vehicle $k$ is updated, leading to a corresponding update in $P_k(R_k^{ref})$. Note that vehicles reach decision points at different times in the routing problem, causing compliance probability updates to occur \emph{asynchronously} across different vehicles. 
As shown in Fig. \ref{fig:comp_prob}, the compliance probability has increased from 0 to approximately 0.9 for all vehicles after passing 5 decision points, demonstrating the effectiveness of the cooperative compliance control in \eqref{qp}. This result confirms that the actual compliance probability can converge to the desired value, even when its dynamics are unknown to the SP. 

Note that the convergence of the compliance probability is not required to be achieved within a single trip. The desired value $Q_k$ can be reached either within a single trip or over multiple trips, depending on the path length and the number of decision points encountered along the path. After each trip, the driver starts the next trip with the latest digital wallet balance and re-initializes the compliance probability based on \eqref{eq:comp-Prob}, causing the convergence process to evolve over a long-term horizon.

\begin{figure}[hpbt]
    \centering   
    \includegraphics[ width=0.9\linewidth]{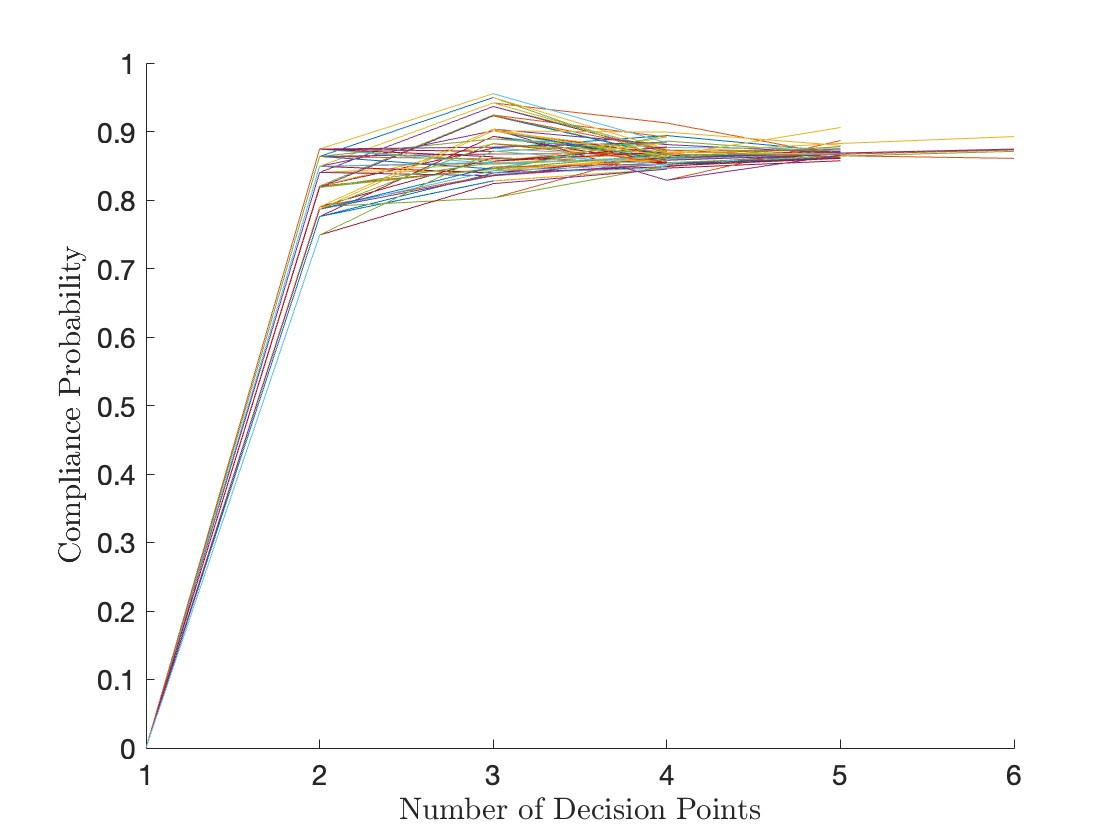} 
    \caption{Compliance Probability Convergence.}
    \label{fig:comp_prob}
\end{figure}

\section{CONCLUSIONS AND FUTURE WORK}
\label{secV:Conclusions}
In this paper, we presented a CCC framework designed for mixed traffic routing problems, aimed at incentivizing non-cooperative HDVs to adhere to the system-wide optimal route guidance established by a SP through a refundable toll scheme. To address the heterogeneous and unpredictable dynamics of human driver responses, we employed CLFs, enabling us to adaptively update our compliance probability model online and ensure the convergence of actual compliance probabilities to desired values, thereby enhancing the efficiency of the road network. The results of our simulations demonstrated the CCC framework’s effectiveness in reducing travel times for vehicle trips and alleviating traffic congestion. Future research will explore the application of advanced machine-learning techniques to accurately model human compliance probabilities based on historical data.

\bibliographystyle{IEEEtran}
\bibliography{cmp}

\begin{thebibliography}{10}
\providecommand{\url}[1]{#1}
\csname url@samestyle\endcsname
\providecommand{\newblock}{\relax}
\providecommand{\bibinfo}[2]{#2}
\providecommand{\BIBentrySTDinterwordspacing}{\spaceskip=0pt\relax}
\providecommand{\BIBentryALTinterwordstretchfactor}{4}
\providecommand{\BIBentryALTinterwordspacing}{\spaceskip=\fontdimen2\font plus
\BIBentryALTinterwordstretchfactor\fontdimen3\font minus \fontdimen4\font\relax}
\providecommand{\BIBforeignlanguage}[2]{{%
\expandafter\ifx\csname l@#1\endcsname\relax
\typeout{** WARNING: IEEEtran.bst: No hyphenation pattern has been}%
\typeout{** loaded for the language `#1'. Using the pattern for}%
\typeout{** the default language instead.}%
\else
\language=\csname l@#1\endcsname
\fi
#2}}
\providecommand{\BIBdecl}{\relax}
\BIBdecl

\bibitem{jeihani2022investigating}
M.~Jeihani, A.~Ansariyar, E.~Sadeghvaziri, A.~Ardeshiri, M.~M. Kabir, A.~Haghani, A.~Jones, E.~Center \emph{et~al.}, ``Investigating the effect of connected vehicles (cv) route guidance on mobility and equity,'' 2022.

\bibitem{van2019travelers}
M.~van Essen, O.~Eikenbroek, T.~Thomas, and E.~van Berkum, ``Travelers' compliance with social routing advice: Impacts on road network performance and equity,'' \emph{IEEE transactions on intelligent transportation systems}, vol.~21, no.~3, pp. 1180--1190, 2019.

\bibitem{toth2002vehicle}
P.~Toth and D.~Vigo, \emph{The vehicle routing problem}.\hskip 1em plus 0.5em minus 0.4em\relax SIAM, 2002.

\bibitem{hofner2012dijkstra}
P.~H{\"o}fner and B.~M{\"o}ller, ``Dijkstra, floyd and warshall meet kleene,'' \emph{Formal Aspects of Computing}, vol.~24, pp. 459--476, 2012.

\bibitem{abbatecola2016review}
L.~Abbatecola, M.~P. Fanti, and W.~Ukovich, ``A review of new approaches for dynamic vehicle routing problem,'' in \emph{2016 IEEE International Conference on Automation Science and Engineering (CASE)}.\hskip 1em plus 0.5em minus 0.4em\relax IEEE, 2016, pp. 361--366.

\bibitem{ritzinger2016survey}
U.~Ritzinger, J.~Puchinger, and R.~F. Hartl, ``A survey on dynamic and stochastic vehicle routing problems,'' \emph{International Journal of Production Research}, vol.~54, no.~1, pp. 215--231, 2016.

\bibitem{bang2021AEMoD}
H.~Bang and A.~A. Malikopoulos, ``Congestion-aware routing, rebalancing, and charging scheduling for electric autonomous mobility-on-demand system,'' in \emph{Proceedings of 2022 American Control Conference (ACC)}, 2022, pp. 3152--3157.

\bibitem{Malikopoulos2020}
A.~A. Malikopoulos, L.~E. Beaver, and I.~V. Chremos, ``Optimal time trajectory and coordination for connected and automated vehicles,'' \emph{Automatica}, vol. 125, no. 109469, 2021.

\bibitem{xiao2021decentralized}
W.~Xiao and C.~G. Cassandras, ``Decentralized optimal merging control for connected and automated vehicles with safety constraint guarantees,'' \emph{Automatica}, vol. 123, p. 109333, 2021.

\bibitem{li2025robust}
A.~Li, A.~S.~C. Armijos, and C.~G. Cassandras, ``Robust optimal lane-changing control for connected autonomous vehicles in mixed traffic,'' \emph{Automatica}, vol. 174, p. 112169, 2025.

\bibitem{ghiasi2019mixed}
A.~Ghiasi, X.~Li, and J.~Ma, ``A mixed traffic speed harmonization model with connected autonomous vehicles,'' \emph{Transportation Research Part C: Emerging Technologies}, vol. 104, pp. 210--233, 2019.

\bibitem{wang2020controllability}
J.~Wang, Y.~Zheng, Q.~Xu, J.~Wang, and K.~Li, ``Controllability analysis and optimal control of mixed traffic flow with human-driven and autonomous vehicles,'' \emph{IEEE Transactions on Intelligent Transportation Systems}, vol.~22, no.~12, pp. 7445--7459, 2020.

\bibitem{li2024safe}
A.~Li, C.~G. Cassandras, and W.~Xiao, ``Safe optimal interactions between automated and human-driven vehicles in mixed traffic with event-triggered control barrier functions,'' \emph{2024 American Control Conference (to appear), arXiv:2310.00534}, 2024.

\bibitem{xiao2024toward}
W.~Xiao, A.~Li, C.~G. Cassandras, and C.~Belta, ``Toward model-free safety-critical control with humans in the loop,'' \emph{Annual Reviews in Control}, vol.~57, p. 100944, 2024.

\bibitem{roughgarden2005selfish}
T.~Roughgarden, \emph{Selfish routing and the price of anarchy}, 2005.

\bibitem{feldmann2003selfish}
R.~Feldmann, M.~Gairing, T.~L{\"u}cking, B.~Monien, and M.~Rode, ``Selfish routing in non-cooperative networks: A survey,'' in \emph{Mathematical Foundations of Computer Science 2003: 28th International Symposium, MFCS 2003, Bratislava, Slovakia, August 25-29, 2003. Proceedings 28}.\hskip 1em plus 0.5em minus 0.4em\relax Springer, 2003, pp. 21--45.

\bibitem{li2024towards}
A.~Li and C.~G. Cassandras, ``Towards achieving cooperation compliance of human drivers in mixed traffic,'' \emph{2025 American Control Conference (to appear) arXiv:2405.17594}, 2024.

\bibitem{chen2008altruism}
P.-A. Chen and D.~Kempe, ``Altruism, selfishness, and spite in traffic routing,'' in \emph{Proceedings of the 9th ACM Conference on Electronic Commerce}, 2008, pp. 140--149.

\bibitem{sadeghi2023social}
A.~Sadeghi, F.~Aros-Vera, H.~Mosadegh, and R.~YounesSinaki, ``Social cost-vehicle routing problem and its application to the delivery of water in post-disaster humanitarian logistics,'' \emph{Transportation research part E: logistics and transportation review}, vol. 176, p. 103189, 2023.

\bibitem{pigou2017economics}
A.~Pigou, \emph{The economics of welfare}.\hskip 1em plus 0.5em minus 0.4em\relax Routledge, 2017.

\bibitem{de2011traffic}
A.~De~Palma and R.~Lindsey, ``Traffic congestion pricing methodologies and technologies,'' \emph{Transportation Research Part C: Emerging Technologies}, vol.~19, no.~6, pp. 1377--1399, 2011.

\bibitem{kockelman2005credit}
K.~M. Kockelman and S.~Kalmanje, ``Credit-based congestion pricing: a policy proposal and the public's response,'' \emph{Transportation Research Part A: Policy and Practice}, vol.~39, no. 7-9, pp. 671--690, 2005.

\bibitem{jalota2023credit}
D.~Jalota, J.~Lazarus, A.~Bayen, and M.~Pavone, ``Credit-based congestion pricing: Equilibrium properties and optimal scheme design,'' in \emph{2023 62nd IEEE Conference on Decision and Control (CDC)}.\hskip 1em plus 0.5em minus 0.4em\relax IEEE, 2023, pp. 4124--4129.

\bibitem{brands2020tradable}
D.~K. Brands, E.~T. Verhoef, J.~Knockaert, and P.~R. Koster, ``Tradable permits to manage urban mobility: market design and experimental implementation,'' \emph{Transportation Research Part A: Policy and Practice}, vol. 137, pp. 34--46, 2020.

\bibitem{elokda2022carma}
E.~Elokda, C.~Cenedese, K.~Zhang, J.~Lygeros, and F.~D{\"o}rfler, ``Carma: Fair and efficient bottleneck congestion management with karma,'' \emph{arXiv preprint arXiv:2208.07113}, 2022.

\bibitem{ferraro2023personalised}
P.~Ferraro, L.~Zhao, C.~King, and R.~Shorten, ``Personalised feedback control, social contracts, and compliance strategies for ensembles,'' \emph{IEEE Internet of Things Journal}, 2023.

\bibitem{bang2024routing}
H.~Bang, A.~Dave, and A.~A. Malikopoulos, ``Routing in mixed transportation systems for mobility equity,'' in \emph{2024 American Control Conference (ACC)}.\hskip 1em plus 0.5em minus 0.4em\relax IEEE, 2024, pp. 1488--1493.

\bibitem{bai2025routingguidanceemergingtransportation}
T.~Bai, A.~Li, G.~Xu, C.~G. Cassandras, and A.~A. Malikopoulos, ``Routing guidance for emerging transportation systems with improved dynamic trip equity,'' \emph{arXiv:2503.12601}, 2025.

\bibitem{rowlinson2005maxwell}
J.~Rowlinson*, ``The maxwell--boltzmann distribution,'' \emph{Molecular Physics}, vol. 103, no. 21-23, pp. 2821--2828, 2005.

\bibitem{khalil2002nonlinear}
H.~K. Khalil, \emph{Nonlinear Systems: Hauptbd}.\hskip 1em plus 0.5em minus 0.4em\relax Prentice Hall, 2002.

\bibitem{ames2012control}
A.~D. Ames, K.~Galloway, and J.~W. Grizzle, ``Control lyapunov functions and hybrid zero dynamics,'' in \emph{2012 IEEE 51st IEEE Conference on Decision and Control (CDC)}.\hskip 1em plus 0.5em minus 0.4em\relax IEEE, 2012, pp. 6837--6842.

\bibitem{xiao2023safe}
W.~Xiao, C.~G. Cassandras, and C.~Belta, \emph{Safe Autonomy with Control Barrier Functions: Theory and Applications}.\hskip 1em plus 0.5em minus 0.4em\relax Springer Nature, 2023.

\bibitem{ames2014rapidly}
A.~D. Ames, K.~Galloway, K.~Sreenath, and J.~W. Grizzle, ``Rapidly exponentially stabilizing control lyapunov functions and hybrid zero dynamics,'' \emph{IEEE Transactions on Automatic Control}, vol.~59, no.~4, pp. 876--891, 2014.

\bibitem{googlemap}
GoogleMaps, \emph{https://www.google.com/maps}.

\bibitem{united1964traffic}
U.~S.~B. of~Public~Roads, \emph{Traffic Assignment Manual for Application with A Large, High Speed Computer}.\hskip 1em plus 0.5em minus 0.4em\relax US Department of Commerce, Bureau of Public Roads, Office of Planning, Urban, 1964, vol.~37.

\end{thebibliography}

\end{document}